\renewcommand{\bbbn}{\mathbb{N}}
\newcommand{\setcompr}[2]{\ensuremath{\{#1\ |\ #2\}}}
\newcommand{\lclause}[2]{\ensuremath{#1\,||\,#2}}
\newcommand{\emptycl}{\ensuremath{\bot}}
\newcommand{\buchi}{B\"uchi}
\newcommand{\ltlnext}{\ensuremath{\bigcirc}}    
\newcommand{\always}{\ensuremath{\Box}}         
\newcommand{\eventually}{\ensuremath{\Diamond}} 
\newcommand{\until}{\ensuremath{\mathsf{U}}}
\newcommand{\release}{\ensuremath{\mathsf{R}}}
\newcommand{\hidden}[1]{}
\begin{document}

\mainmatter

\title{Variable and clause elimination for \\
LTL satisfiability checking\thanks{Partly supported by Microsoft Research through its PhD Scholarship Programme.}}

\author{Martin Suda
}

\institute{
Max-Planck-Institut f\"ur Informatik, Saarbr\"ucken, Germany
\\ 
Saarland University, Saarbr\"ucken, Germany
\\ 
Charles University, Prague, Czech Republic
}

\maketitle


\begin{abstract}
We study preprocessing techniques for clause normal forms of LTL formulas.
Applying the mechanism of labelled clauses
enables us to reinterpret LTL satisfiability as a set of purely propositional problems
and thus to transfer simplification ideas from SAT to LTL.
We demonstrate this by adapting variable and clause elimination,
a very effective preprocessing technique used by modern SAT solvers.
Our experiments confirm that even in the temporal setting substantial
reductions in formula size and subsequent decrease of solver runtime can be achieved.
\end{abstract}

\section{Introduction}

Linear temporal logic (LTL) is a modal logic with modalities referring to time \cite{firstP77}.
Traditionally, it finds its use in formal verification of reactive systems
where it serves as a specification language for expressing the system's desired behavior.
The specifications are subsequently checked against a model of the system
during the process of \emph{model checking} \cite{mcCGPbook01}.
More recently, the importance of LTL \emph{satisfiability checking} is becoming recognized 
\cite{automRozVar07,Schuppan11}, where the task is to decide whether a given LTL formula has a model at all.
This is, for instance, essential for assuring quality of formal specifications \cite{fahrPill06}.
Satisfiability checking of LTL is a computationally difficult task, in fact a PSPACE-complete one \cite{cplxSC85}, 
and thus techniques for improving solving methods are of practical importance. 


One possibility for speeding up the checking lies in simplifying the input formula
before the actual decision method is started. 
In the context of resolution-based methods for LTL satisfiability 
\cite{ctrFDP01,LS4PLTLpaper12}, on which we focus here,
formulas are first translated into a clause normal form.
Simplification then means reducing the number of clauses and variables
 while preserving satisfiability of the formula.
Such a preprocessing step may have a significant positive impact on the subsequent running time.

In this paper we take inspiration from the SAT community where a technique 
called variable and clause elimination \cite{EB05} has been shown to be particularly effective.
It combines exhaustive application of the resolution rule over selected variables
with subsumption and other reductions. Our main contribution lies in showing
that variable and clause elimination can be adapted from SAT to the setting of LTL.
This is quite non-trivial, because LTL normal forms consist of \emph{temporal clauses},
which are bound to specific temporal contexts and so their interactions in inferences and reductions
need to be carefully controlled.

A general method for reducing LTL satisfiability to the purely propositional setting has been introduced in \cite{LS4PLTLpaper12}.
There, the existence of a model of an LTL formula is shown to be equivalent to satisfiability 
of one of infinitely many potentially infinite standard clause sets.
These are, however, finitely represented with the help of \emph{labels}, which allows for
an effective transfer of resolution-based reasoning techniques from propositional logic to LTL.
In this paper, we extend the ideas of \cite{LS4PLTLpaper12} to adapt variable and clause elimination.
An additional label component is needed to justify elimination in its general form,
but we prove it can be dispensed with after the elimination process.



Our exposition starts in Sect.~\ref{sec_normal},
where we describe our version of clause
normal form of LTL formulas, which we call LTL-specification.
Specifications are a particular refinement of the Separated Normal Form \cite{fisher91},
which can be seen as concise descriptions of B\"uchi automata.
This observation, which is of independent interest, represents another contribution of this paper.
The mechanism of labelled clauses itself is introduced in Sect.~\ref{sec_label}
and utilized for variable and clause elimination in Sect.~\ref{sec_elim}.
Practical potential of our method is demonstrated in Sect.~\ref{sec_experiment},
where we describe the effect of the simplification on runtimes of two resolution-based LTL provers
over an extensive set of benchmark problems. In Sect.~\ref{sec_discuss}
we follow the connection to B\"uchi automata to discuss related work,
and we conclude in Sect.~\ref{sec_concl} by mentioning possibilities for future work.

\section{Preliminaries} \label{sec_normal}

We assume the reader is familiar with propositional logic and the syntax and semantics of LTL.\footnote{See Appendix \ref{sec_prelim} for a short overview.}
LTL formulas are built over a given \emph{signature} $\Sigma = \{p,q,r,\ldots\}$ of propositional variables
using propositional connectives $\neg, \land, \lor ,\ldots$, and temporal operators $\ltlnext, \always, \eventually, \until, \ldots$ 
Propositional clauses, denoted $C,D$, possibly with subscripts, are sets of literals understood as disjunctions.
A propositional \emph{valuation} is a mapping $W : \Sigma \rightarrow \{0,1\}$.
We write $W \models C$ if a valuation $W$ propositionally satisfies a clause $C$.
An \emph{interpretation} of an LTL formula is an infinite sequence of valuations $(W_i)_{i \in \bbbn}$,
in this context also referred to as \emph{states}.




In order to talk about two neighboring states at once
we introduce a disjoint copy of the basic signature $\Sigma' = \{p',q',r',\ldots\}$.
Given a clause $C$ over $\Sigma$, we write $C'$ to denote its obvious counterpart over $\Sigma'$.
For a valuation $W$ over $\Sigma$ let $W'$ denote the valuation over $\Sigma'$ that behaves on primed symbols
in the same way as $W$ does on unprimed ones.
We therefore have $W\models C$ if and only if $W' \models C'$ for any such $W$ and $C$.
If $W_1$ and $W_2$ are two valuations over $\Sigma$,
we let $[W_1,W_2]$ denote the joined valuation $W_1 \cup (W_2)': \Sigma \cup \Sigma' \rightarrow \{0,1\}$. 
Such a valuation is needed to evaluate clauses over the joined signature $\Sigma \cup \Sigma'$.


           

Most resolution-based approaches to satisfiability checking first translate
the input formula into a certain normal form. In the context of LTL,
the Separated Normal Form (SNF) developed by Fisher \cite{fisher91}
has proven to be very useful. It is obtained from an LTL formula 
by applying transformations that 
1) introduce new variables as names for complex subformulas, 
2) remove temporal operators by expanding their fixpoint definitions,
3) apply classical style rewrite operations to obtain a result which is clausal, 
i.e. represented by a top-level conjunction of temporal clauses, which are disjunctive in nature. 
The whole transformation preserves satisfiability 
of the input formula and it is ensured that the result does not grow in size by more than a linear factor \cite{ctrFDP01}.\footnote{
A streamlined version of the transformation can be found in Appendix \ref{sec_ltl2snf}.}

In this paper we use a particular refinement of SNF which we call LTL-specification \cite{LS4PLTLpaper12}.
To obtain a specification, a general SNF is first normalized further by using the ideas of \cite{simplifiedDFK02}.
In particular, we transform the so called conditional eventuality clauses to unconditional ones 
and then reduce the potentially multiple (unconditional) eventuality clauses to just one eventuality clause.\footnote{
A recapitulation of these refinements has been moved to Appendix \ref{sec_snf2spec}.}
Finally, to obtain a compact representation, we explicitly sort the clauses into three categories,
strip them off the temporal operators and write them down using standard propositional clauses instead.
The semantics is preserved as it now follows from the context.
Even after these refinements the result is linearly bounded in size and equisatisfiable with respect to the original formula.



\begin{definition}
An \emph{LTL-specification} is a quadruple $\mathcal{S} = (\Sigma,I,T,G)$ such that 
	\begin{itemize}
	\item
		$\Sigma$ is a finite propositional signature,
	\item
		$I$ is a set of \emph{initial} clauses $C_i$ over the signature $\Sigma$,
	\item
		$T$ is a set of \emph{step} clauses $C_t \lor (D_t)'$ over the joined signature $\Sigma \cup \Sigma'$,		
	\item
		$G$ is a set of \emph{goal} clauses $C_g$ over the signature $\Sigma$.
	\end{itemize}
\end{definition}
The initial and step clauses are directly translated from SNF. 
The goal clauses \emph{all together} express the single eventuality obtained in the previous step.
This generalization (from a single goal clause) is for free and appears to make the definition conceptually cleaner.
Intuitively, specification stands for the LTL formula
\[\left( \bigwedge C_i \right) \land \Box \left( \bigwedge (C_t \lor \bigcirc D_t) \right) \land \Box \Diamond \left( \bigwedge C_g \right)\enspace ,\]
which directly translates to the following formal definition.
\begin{definition}
An interpretation $(W_i)_{i\in\mathbb{N}}$ is a \emph{model} of $\mathcal{S} = (\Sigma,I,T,G)$ if 
\begin{enumerate}
	\item
		for every $C_i \in I$, $W_0 \models C_i$,

	\item
		for every $i \in\mathbb{N}$ and every $C_t \lor (D_t)' \in T$, $[W_i,W_{i+1}] \models C_t \lor (D_t)'$, and

	\item
		there are infinitely many indices $j$ such that for every $C_g \in G$, $W_j \models C_g$.			
\end{enumerate}
An LTL-specification $\mathcal{S}$ is \emph{satisfiable} if it has a model.
\end{definition}

\begin{remark} \label{rem_buchi}
We close this section with an interesting observation relating our approach to LTL satisfiability 
to explicit methods based on automata. It is well known (see e.g. \cite{GPVW95}) that for any LTL formula
$\varphi$ there is a \buchi\ automaton $\mathcal{A}_\varphi$ recognizing models of $\varphi$,
i.e. an automaton that accepts exactly those valuations $(W_i)_{i\in\mathbb{N}}$ that are models of $\varphi$.
The size of such an automaton, i.e. the number of its states, is bounded by $2^{|\varphi|}$, 
where $|\varphi|$ denotes the size of the formula.


Now we can easily interpret an LTL-specification $\mathcal{S}$ as a \emph{symbolic description} of such an automaton.
The states of the automaton are formed by the set $Q = 2^\Sigma$, i.e. the set of all valuations over $\Sigma$, 
its transition function $\delta = \setcompr{(W_1,W_2)}{[W_1,W_2] \models \bigwedge (C_t \lor (D_t)')}$
contains those pairs of valuations that satisfy the step clauses, and its initial and accepting sets 
are defined as $Q_I = \setcompr{W}{W \models \bigwedge C_i}$ and $Q_F = \setcompr{W}{W \models \bigwedge C_g}$, respectively.
It is easy to check that the models of $\mathcal{S}$ are exactly the accepting runs of this automaton.


This way one can view the transformations from an LTL fomula to SNF and further to LTL-specification
as an alternative way of obtaining a \buchi\ automaton for the formula.
Interestingly, it is only the last step, when the automaton is made explicit, that 
incurs the inherent exponential blowup.
\end{remark}

\section{Mechanism of labelled clauses} \label{sec_label}

The purpose of this section is to show that the task of LTL satisfiability
can be reduced to a set of purely propositional SAT problems.
This provides a means for transferring 
the well-known resolution-based reasoning techniques
from the propositional level to that of LTL.
In particular, it will in Sect.~\ref{sec_elim} allow us to transfer
variable and clause elimination.
The reduction from LTL that we present
leaves us with infinitely many propositional problems 
over an infinite signature. Labels are then used to \emph{finitely} represent 
and control clauses within these problems, abbreviating entire clause sets.

Assume we have an LTL-specification $\mathcal{S} = (\Sigma,I,T,G)$ 
and want to decide satisfiability of the formula it represents.
It is a known fact that when considering satisfiability of LTL formulas
attention can be restricted to \emph{ultimately periodic}~\cite{cplxSC85} 
interpretations. These start with a finite sequence of states and then
repeat another finite sequence of states forever. This observation,
which is one of the key ingredients of our approach, motivates the following definition.

\begin{definition}
Let $K \in \bbbn$, and $L \in \bbbn^+ = \bbbn \setminus \{0\}$ be given.
An interpretation $(W_i)_{i\in\bbbn}$ is a \emph{$(K,L)$-model} of $\mathcal{S}=(\Sigma,I,T,G)$ if
\begin{enumerate}
	\item
		for every $C \in I$, $W_0 \models C$,
	\item
		for every $i \in\bbbn$ and every $C \in T$, $[W_i,W_{i+1}] \models C$,
	\item
		for every $i \in\bbbn$ and every $C \in G$, $W_{(K+i \cdot L)} \models C$.
\end{enumerate}
\end{definition}

Satisfiability within a $(K,L)$-model for \emph{some} values of $K$ and $L$ 
corresponds to the original semantics except 
that the condition on the goal clauses to be satisfied in infinitely many states is 
now controlled and we require that these states form 
an arithmetic progression with $K$ as the initial term and $L$ the common difference.
Please consult \cite{SudaWeidenbachReport2012} for a detailed proof of why 
focusing only on $(K,L)$-models
does not change the notion of satisfiability.

For a particular choice of $K$ and $L$, 
the existence of a $(K,L)$-model can be stated as an infinite but purely propositional problem 
over the infinite signature $\Sigma^* = \bigcup_{i\in \bbbn} \Sigma^{(i)}$. 
Here we extend the convention about priming and allow it to be applied more than once.
Thus along with signatures $\Sigma$ and $\Sigma'$ we also have $\Sigma'', \Sigma''', \ldots$ (also written $\Sigma^{(2)}, \Sigma^{(3)}, \ldots$),
as other disjoint copies of the basic signature implicitly meant to represent states further in the future.
Now the purely propositional problem simply restates the definition of a $(K,L)$-model
in the form of clauses over $\Sigma^*$,
making use of the natural bijection between propositional valuations over $\Sigma^*$ and interpretations.\footnote{
Given $W^*: \Sigma^* \rightarrow \{0,1\}$, the corresponding interpretation $(W_i)_{i\in\mathbb{N}} : \bbbn \times \Sigma \rightarrow \{0,1\}$
is defined by the equation $W_i(p) = W^*(p^{(i)})$ for every $i \in \bbbn$ and every $p \in \Sigma$.}
It consists of:
\begin{itemize}
\item
	the set of initial clauses $I = \setcompr{C^{(0)}}{C \in I}$,
\item
	together with $\setcompr{C^{(i)}}{C \in T, i \in \bbbn}$,	
\item
	and with $\setcompr{C^{(K+i \cdot L)}}{C \in G, i \in \bbbn}$,
\end{itemize}
where the symbol $C^{(i)}$ means that each literal in $C$ is being ``moved $i$ signatures forward''.
Thus, e.g., for a clause $C = p \lor q'$ over $\Sigma \cup \Sigma'$ we denote by $C^{(2)}$ 
the clause $p^{(2)} \lor q^{(3)}$ over $\Sigma^{(2)} \cup \Sigma^{(3)}$. See Figure~\ref{fig_klmodel}
for an illustration of the situation.



\begin{figure}[tb]
\begin{center}
\includegraphics[scale=1.0]{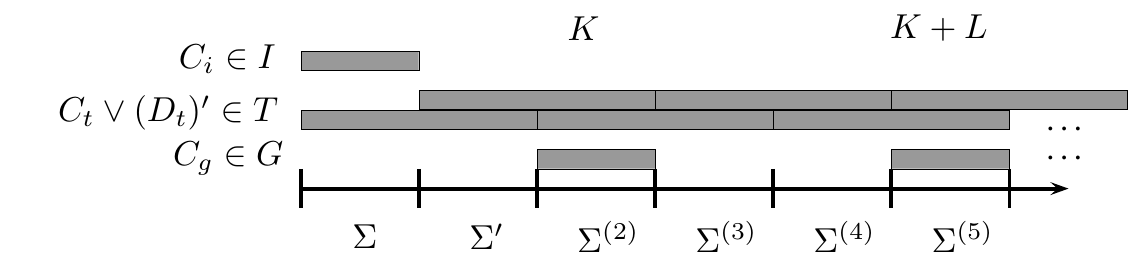}
\caption{Schematic presentation of the potentially infinite set of clauses that is satisfiable if and only if
an LTL-specification $S=(\Sigma,I,T,G)$ has a $(K,L)$-model with $K=2$ and $L=3$.
The axis represents the infinite signature $\Sigma^*$,
while the grey bars stand for individual copies of the initial, step, and goal clauses, respectively.}
\label{fig_klmodel}
\end{center}
\end{figure}

We have now reduced LTL satisfiability of a specification $\mathcal{S}$ to 
infinitely many (for every pair of $K$ and $L$) infinite propositional problems over $\Sigma^*$.
We proceed by assigning labels to the clauses of $\mathcal{S}$ such that a \emph{labelled clause}
represents up to infinitely many \emph{standard clauses} over $\Sigma^*$.
Then an inference performed between labelled clauses corresponds to infinitely many inferences on the level of $\Sigma^*$.
This is similar to the idea of ``lifting'' from first-order theorem proving where clauses with variables represent up to infinitely many ground instances.
Here, however, we deal with the additional dimension of performing infinitely many reasoning tasks on the ``ground level'' in parallel, one for each pair $(K,L)$.
\begin{definition}
A \emph{label} is a triple $(b,k,l) \in \{*,0\}\times(\{*\}\cup \bbbn)\times\bbbn$.
A \emph{labelled clause} $\mathcal{C}$ is a pair $\lclause{(b,k,l)}{C}$
consisting of a label and a standard clause over $\Sigma^*$.
\end{definition}
Semantics of labels is given via a map to certain sets of time indices. 
\begin{definition}
Let $K\in \bbbn$ and $L \in \bbbn^+$ be given. 
We define a set $R_{(K,L)}(b,k,l)$ of indices \emph{represented} by the label $(b,k,l)$
as the set of all $t \in \bbbn$ such that 
\begin{enumerate}
\item
	$b \neq * \rightarrow t = 0$ and	
\item
	$k \neq * \rightarrow \exists s \in \bbbn \, . \, t + k = K+s \cdot L$ and
\item
	$L$ divides $l$.
\end{enumerate}
Now a standard clause of the form $C^{(t)}$ is said to be \emph{represented by the labelled clause
$\lclause{(b,k,l)}{C}$ in $(K,L)$} if $t \in R_{(K,L)}(b,k,l)$.
\end{definition}
The three label components stand for three independent conditions on the time indices to which the clause relates.
The first label component $b$ relates the clause to the beginning of time, and
the second component relates the clause to the indices of the form $K + i\cdot L$, where the goal should be satisfied.
In both cases, $*$ stands for a ``don't care'' value, so if $b$ or $k$ equals $*$, the respective condition is trivially satisfied by any index.
The same effect is achieved for the third condition when $l = 0$, because every positive integer divides $0$.

New label values are computed from old ones using certain operations
when labelled clauses interact in inferences, as will be detailed shortly.
When, initially, a labelled clause set is constructed from an LTL-specification 
(see Definition~\ref{def_initial} below) three particular label values are used.
Further values arise as results of applying the mentioned operations, and the full generality
of labels reflects an entire ``closure'' of the three initial values under these operations.
\begin{definition} \label{def_initial}
Given an LTL-specification $\mathcal{S}=(\Sigma,I,T,G)$, the \emph{initial labelled clause set} $N_\mathcal{S}$ for $\mathcal{S}$ is defined to contain
\begin{itemize}
\item
	labelled clauses of the form $\lclause{(0,*,0)}{C}$ for every $C \in I$,	
\item
	labelled clauses of the form $\lclause{(*,*,0)}{C}$ for every $C \in T$, and	
\item
	labelled clauses of the form $\lclause{(*,0,0)}{C}$ for every $C \in G$.
\end{itemize}
\end{definition}
For any particular choice of $K$ and $L$ the standard clauses over $\Sigma^*$
represented by the labelled clauses from the initial labelled clause set $N_\mathcal{S}$ 
form the purely propositional problem that encodes the existence of a $(K,L)$-model of $\mathcal{S}$. 
\begin{example} 
Let us assume that a specification $\mathcal{S}$ contains a goal clause $(a \lor b) \in G$.
In the initial labelled clause set $N_S$ this goal clause becomes $\lclause{(*,0,0)}{a \lor b}$. 
If we now, for example, fix $K=2$ and $L=3$ as in Fig.~\ref{fig_klmodel}, our labelled clause will represent
all the standard clauses $(a \lor b)^{(t)}$ with $t \in R_{(2,3)}(*,0,0)=\{2,5,8,\dots\}$.
\end{example}


The ultimate goal of this section is to ``lift'' the classical resolution inference rule to labelled clauses.
When two labelled clauses resolve with each other,
a merge operation is applied to their labels to produce the label of the resolvent.
The idea is that the labelled resolvent represents exactly 
those standard clauses that are resolvents of all the possible indicated resolution 
inferences between standard clauses represented by the labelled premises.
\begin{definition}[Labelled resolution]
\begin{equation} \label{labelled_resolution}
\frac{\ \lclause{(b_1,k_1,l_1)}{A \lor C} \quad \lclause{(b_2,k_2,l_2)}{\neg A \lor D}\ }
	   						  {\lclause{(b,k,l)}{C \lor D}} \enspace.
\end{equation}
The two labelled clauses above the line are the inference's premises.
$A$ is an atom, $C$ and $D$ are standard clauses over $\Sigma^*$,
and the label $(b,k,l)$ is the \emph{merge} of $(b_1,k_1,l_1)$ and $(b_2,k_2,l_2)$
defined imperatively as follows:
\begin{itemize}
	\item
		{if} $b_1 = *$ {then} $b := b_2$  
		{else if} $b_2 = *$ {then} $b := b_1$ {else}
		$b := 0$,

	\item	
		{if} $k_1 = *$ {then} $k := k_2$  
		{else if} $k_2 = *$ {then} $k := k_1$ {else}
		$k := \min(k_1,k_2)$,

	\item		
		{if} $k_1 = *$ {or} $k_2 = *$ {then} 
		$l := \gcd(l_1,l_2)$  
		{else} 
		$l := \gcd(l_1,l_2,k_1-k_2)$.
\end{itemize}
\end{definition}
It is straightforward to verify that for every $(K,L)$ the merge operation captures
the \emph{intersection} of the sets of indices represented by its operands
and thus the resulting label represents all the time indices
where standard clauses represented by the inference's premises interact to produce a resolvent. 

\begin{example}
Merge of $(*,2,0)$ and $(*,5,0)$ is $(*,2,3)$; 
we compute the minimum of the $k$ components, and the greatest common divisor of their difference and the original $l$ components.
Merge of $(*,2,3)$ and $(*,2,3)$ is $(*,2,3)$; merge is, in fact, idempotent.
Merge of $(*,2,3)$ and $(*,*,0)$ is $(*,2,3)$; merge has, in fact, a neutral element $(*,*,0)$.
Merge of $(*,2,3)$ and $(0,1,4)$ is $(0,1,1)$.
\end{example}


Not all the resolution inferences from the ``ground level'' of $\Sigma^*$ are directly visible
to the labelled resolution inference (\ref{labelled_resolution}) above.
To obtain a \emph{complete} correspondence, labelled resolution must, in general, be 
preceded by applying the following \emph{time shift} operation
to one of the premises, so that the atom $A$ and its matching partner $\neg A$ from the ``ground level''
become represented by matching counterparts in labelled clauses:
\begin{align}
\lclause{(*,*,l)}{C} & \leadsto \lclause{(*,*,l)}{(C)'}, \label{temp_shift1} \\
\lclause{(*,k,l)}{C} & \leadsto \lclause{(*,k+1,l)}{(C)'}. \label{temp_shift2}
\end{align}
Soundness of time shift is the statement that all the standard clauses represented by the
right hand side of (\ref{temp_shift1}) and (\ref{temp_shift2})
are also represented by the respective left hand sides in any $(K,L)$. 
Note that the operation is \emph{undefined} for labelled clauses with the first component $b=0$,
because these only represent standard clauses fixed to the first time index.




\begin{example}
Let two labelled clauses $\lclause{(*,0,0)}{\neg p \lor q}$ and $\lclause{(*,0,0)}{r \lor p'}$ be given.
They cannot directly participate in a labelled resolution inference, although in $(K,L)=(0,1)$
there are (for every $t$) standard clauses $\neg p^{(t+1)} \lor q^{(t+1)}$ and $r^{(t)} \lor p^{(t+1)}$
represented, respectively, by the two labelled clauses, which resolve on $p^{(t+1)}$.
When the first labelled clause is shifted to $\lclause{(*,1,0)}{\neg p' \lor q'}$,
the clauses resolve on $p'$ and a labelled resolvent $\lclause{(*,0,1)}{r \lor q'}$ is obtained.
\end{example}

\section{Elimination} \label{sec_elim}

By variable and clause elimination we understand the preprocessing technique
described in \cite{EB05} for simplifying propositional SAT problems. 
It consists of a combination of a controlled version of variable elimination
and subsumption\footnote{A standard clause $C$ \emph{subsumes} a clause $D$,
if $C$'s literals are a subset of $D$'s literals. Subsumed clauses are redundant and can be discarded.} reduction for removing clauses, as described below.
These two are alternated in a saturation loop until no further immediate improvement is possible.
This section describes how the mechanism of labelled clauses can be used to 
adapt variable and clause elimination to the context of LTL. 

Propositional variable elimination relies on  \emph{exhaustive} application of the resolution inference rule.
Given (standard) clauses $C = p \lor C_0$ and $D = \neg p \lor D_0$,
their standard resolvent $C \otimes D$ is $C_0 \lor D_0$.
Now, given a propositional problem in CNF consisting of a set of clauses $N$ and a variable $p$,
one separates $N$ into three disjoint subsets $N = N_p \cup N_{\neg p} \cup N_0$ of clauses.
The first set, $N_p$, is a set of clauses containing the variable $p$ positively,
the clauses from $N_{\neg p}$ contain $p$ negatively, and $N_0$ is a set of clauses without variable $p$.
A new clause set $\overline{N}$ is obtained as $(N_p \otimes N_{\neg p}) \cup N_0$, where 
$N_p \otimes N_{\neg p} = \setcompr{C \otimes D}{C \in N_p, D \in N_{\neg p}}.$
The set $\overline{N}$ no longer contains the variable $p$ and is satisfiable if and only if $N$ is.

The obtained set $\overline{N}$ may contain tautological clauses\footnote{A tautological clause contains both a variable 
and its negation.}, which are redundant and should be removed. Then the sizes of $N$ and $\overline{N}$ are compared.
In general, eliminating a single variable may incur a quadratic blowup.
An elimination step is only considered an \emph{improvement} and should be committed to when
the size of $\overline{N}$ is not greater than that of $N$ (possibly up to an additive constant). 
It is shown in \cite{EB05} that improvement eliminations occur often in practice
and that they can be used to simplify the input formula considerably.


Let us now turn to eliminating variables from LTL-specifications.
We know that specifications naturally correspond
to sets of labelled clauses and these in turn represent propositional problems (albeit, in general, infinite ones)
from which variables can be eliminated by the standard procedure described above.
There is still a complication, however, because a single variable $p \in \Sigma$ from the specification
corresponds to all its ``instances'' $p, p', p^{(2)}, \ldots$ on the ``ground level'' of the signature $\Sigma^*$.
To be able to represent the result after elimination, 
all these instances need to be eliminated from the ground level uniformly, in one step.
This seems to be a difficult task when the specification contains a clause
that mentions the variable $p$ in two different time contexts,
like, for example, in $\neg p \lor q \lor p'$.
In this case the individual eliminations cannot be done independently from each other
and we rule the case out from further considerations.

\begin{remark}
There are some interesting subcases where eliminating such a variable would, in theory, be possible and would yield useful results.
Consider the SNF containing $p,\, \always( \neg p \lor p'),\, \always( \neg p \lor r)$,
from which $p$ can be ``semantically''
eliminated and one obtains $\always r$. 
On the other hand, eliminating $p$ from the SNF containing 
$p,\, \always( \neg p \lor  \neg p'),\, \always( p \lor p'),\, \always( \neg p \lor a)$
 should give us a formula whose models $(W_i)_{i \in \bbbn}$ satisfy the condition
$(i \bmod 2 = 0 \Rightarrow W_i \models a)$, which is a property known \cite{Wolper83}
not to be expressible by an LTL formula over the single variable $a$.
\end{remark}

Let us now, therefore, assume that we are given a set of labelled clauses $N$,
perhaps an initial labelled clause set for a specification $\mathcal{S}$,
and a variable $p \in \Sigma$ such that no clause in $N$ contains more 
than one possibly primed occurrence of $p$.
We separate $N$ into $N_p \cup N_{\neg p} \cup N_0$, a subset containing 
$p$ positively (possibly primed), a subset containing $p$ negatively (possibly primed), 
and a subset not containing $p$ at all. A new set of labelled clauses $\overline{N}$
is constructed as $(N_p \otimes N_{\neg p}) \cup N_0$. This time 
$N_p \otimes N_{\neg p}$ stands for the set of all the results of performing labelled resolution inference (\ref{labelled_resolution})
on pairs of clauses from $N_p$ and $N_{\neg p}$, respectively, which may include 
shifting one of the premises in time using the rules (\ref{temp_shift1}) or (\ref{temp_shift2}).

\begin{example} \label{example_eliminate}
Let us assume that a set $N$ contains the following labelled clauses
\begin{align}
\lclause{(0,*,0) &}{p \lor q \lor r},     \label{ex_cl_1} \\
\lclause{(0,*,0) &}{\neg p \lor \neg r},  \label{ex_cl_2} \\
\lclause{(*,*,0) &}{r \lor \neg p'},      \label{ex_cl_3} \\
\lclause{(*,0,0) &}{\neg p \lor q},       \label{ex_cl_4} 
\end{align}
and these are the only labelled clauses of $N$ mentioning variable $p$.
Then eliminating $p$ from $N$ means removing the above labelled clauses 
and replacing them by all the possible labelled resolvents over $p$.
Notice that, actually,
\begin{itemize}
	\item the tautology $(\ref{ex_cl_1})\otimes(\ref{ex_cl_2}) = \lclause{(0,*,0)}{q \lor r \lor \neg r}$ is immediately dropped,
		
	\item and $(\ref{ex_cl_1})\otimes(\ref{ex_cl_3})$ is undefined, because temporal shift does not apply to $(\ref{ex_cl_1})$.	
\end{itemize}	
  Thus the above four clauses are replaced in $N$ by the only nontrivial resolvent
  $(\ref{ex_cl_1})\otimes(\ref{ex_cl_4}) = \lclause{(0,0,0)}{q \lor r}$.
\end{example}

To formulate soundness theorems in this section 
we need a satisfiability notion for labelled clauses.
We extend the definition of a $(K,L)$-model, 
relying on the correspondence between valuations over $\Sigma^*$ and interpretations (see Sect.~\ref{sec_label}).
\begin{definition}
Let $N_{(K,L)} = \setcompr{C^{(t)}}{\lclause{(b,k,l)}{C}\in N \ \&\ t \in R_{(K,L)}(b,k,l)}$ 
denote the set of standard clauses represented in $(K,L)$ by the labelled clauses from $N$.
A set of labelled clauses $N$ is called \emph{$(K,L)$-satisfiable} if there is a valuation $W^*: \Sigma^*\rightarrow \{0,1\}$ which (propositionally) satisfies $N_{(K,L)}$.
The set $N$ is called \emph{satisfiable} if it is $(K,L)$-satisfiable for some $K \in \bbbn$ and $L \in \bbbn^+$.
\end{definition}
Soundness of variable elimination for labelled clauses now reads.
\begin{theorem}
Let $N$ and $\overline{N}=(N_p \otimes N_{\neg p}) \cup N_0$ be sets of labelled clauses as described above.
Then $N$ is $(K,L)$-satifiable if and only if $\overline{N}$ is.
\end{theorem}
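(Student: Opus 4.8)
The plan is to reduce the statement to the classical propositional result about variable elimination --- namely that replacing $N$ by $(N_p\otimes N_{\neg p})\cup N_0$ preserves satisfiability --- applied separately in each fixed context $(K,L)$. So fix $K\in\bbbn$ and $L\in\bbbn^+$ throughout, and work entirely with the ground clause sets $N_{(K,L)}$ and $\overline{N}_{(K,L)}$ over the infinite signature $\Sigma^*$. The key observation is that the operation of forming $\overline{N}$ from $N$ at the labelled level mirrors, when passed through the $R_{(K,L)}(\cdot)$ semantics, an \emph{honest} propositional variable elimination at the ground level, but carried out simultaneously for the whole family of ground variables $\{p^{(t)} : t\in\bbbn\}$ rather than for a single one. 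Because no clause of $N$ mentions $p$ in two different time contexts, each ground clause in $N_{(K,L)}$ contains at most one of the $p^{(t)}$, so eliminating all of them one after another is well defined and the eliminations do not interfere.

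The first step I would carry out is a precise bookkeeping lemma: the ground set $\overline{N}_{(K,L)}$ equals (up to tautologies, which are semantically irrelevant) the set obtained from $N_{(K,L)}$ by eliminating every variable in $\{p^{(t)}:t\in\bbbn\}$. For the ``$\supseteq$'' direction one takes a ground resolvent $C^{(s)}\otimes D^{(s')}$ on some $p^{(t)}$ coming from clauses represented by $\lclause{(b_1,k_1,l_1)}{A\lor C}$ and $\lclause{(b_2,k_2,l_2)}{\neg A\lor D}$; one checks that an appropriate number of applications of the time-shift rules \eqref{temp_shift1}/\eqref{temp_shift2} aligns the two premises so that labelled resolution \eqref{labelled_resolution} fires, and that the merged label represents exactly the index $s$ (this uses the fact, stated in the excerpt, that merge captures the intersection of represented index sets, and that time shift is sound). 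For ``$\subseteq$'' one runs the same computation in reverse: every labelled resolvent's ground instances are genuine ground resolvents on some $p^{(t)}$. Here I would lean on the remark in the paper that the labelled inference plus time shift is \emph{complete} with respect to the ground level. The side condition ``at most one possibly primed occurrence of $p$'' is what guarantees the $N_p/N_{\neg p}/N_0$ split descends coherently to the ground level.

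The second step is then purely propositional and routine: iteratively applying the classical fact that $(M_p\otimes M_{\neg p})\cup M_0$ is equisatisfiable with $M$ for a single variable, one shows that eliminating all the $p^{(t)}$ from $N_{(K,L)}$ yields an equisatisfiable set. A small care point is that infinitely many variables are being eliminated, so I would phrase the equivalence via model transformations rather than a finite induction: given a model $W^*$ of $N_{(K,L)}$, its restriction forgetting the values of all $p^{(t)}$ still satisfies $\overline{N}_{(K,L)}$; conversely, given a model $V^*$ of $\overline{N}_{(K,L)}$ one extends it by choosing, for each $t$ independently, a value for $p^{(t)}$ that satisfies all clauses of $N_{(K,L)}$ containing $p^{(t)}$ --- such a value exists for each $t$ by exactly the classical argument, because $V^*$ already satisfies all the resolvents on $p^{(t)}$, and independence across different $t$ is legitimate since no ground clause involves two of these variables. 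Combining the two steps gives $(K,L)$-satisfiability of $N$ iff of $\overline{N}$, which is the theorem.

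I expect the main obstacle to be the bookkeeping in the first step: making the correspondence between labelled resolvents (with the imperative merge of $b,k,l$ and the possible prefixed time shift) and ground resolvents on $p^{(t)}$ genuinely watertight in \emph{both} directions, for \emph{all} $(K,L)$ at once. In particular one must verify that the time-shift prerequisite is never blocked in a way that loses a needed ground resolvent --- the only blocking case is $b=0$, and one has to check that whenever a ground resolvent on $p^{(t)}$ with $t\ge 1$ exists, at least one premise is shiftable (its label has $b=*$), which follows because a clause with $b=0$ only has the single ground instance at $t=0$ and its resolution partner must then also sit at $t=0$ in its unshifted form. Handling the $l$-divisibility and the $\gcd(\cdot,k_1-k_2)$ arithmetic inside $R_{(K,L)}$ carefully is the other fiddly part, but it is exactly the ``intersection of represented indices'' computation the paper has already asserted, so I would cite that rather than redo it.
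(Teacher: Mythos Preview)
The paper states this theorem without proof; the text moves on immediately to a discussion of further practical limitations. There is therefore no argument in the paper to compare against, and your proposal is in effect supplying the omitted justification.

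Your approach is the natural one and is correct. Passing to the ground level for a fixed $(K,L)$, identifying $\overline{N}_{(K,L)}$ with the clause set obtained from $N_{(K,L)}$ by eliminating the entire family $\{p^{(t)}:t\in\bbbn\}$, and then arguing equisatisfiability by a two-way model transformation rather than by a finite induction is exactly how this must be done; the independence of the choices for the different $p^{(t)}$ in the backward direction is precisely where the single-time-context hypothesis on $p$ is consumed. Your treatment of the $b=0$ blocking case for time shift is also right: whenever a ground resolvent on some $p^{(t)}$ exists, the premise that would need to be shifted necessarily carries $b=*$, because a $b=0$ premise is anchored at index $0$ and forces its partner's matching occurrence of $p$ to sit at that same level already (or else the ground resolvent would require a negative time index for the partner). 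The remaining arithmetic about merge and $R_{(K,L)}$ is exactly the ``merge realises intersection of represented index sets'' assertion the paper makes just after the definition of labelled resolution, so citing that rather than reproving it is appropriate.
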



Apart from the previously explained limitation, 
there is another restriction on practical variable elimination. 
Consider a clause set consisting of two labelled clauses
$\lclause{(*,*,0)}{\neg x \lor p'}$ and $\lclause{(*,*,0)}{\neg p \lor y'}$.
Eliminating $p$ with the help of labelled resolution yields the single labelled clause $\lclause{(*,*,0)}{\neg x \lor y''}$.
This could be a useful simplification in some contexts, but notice that it got us outside
SNF and LTL-specifications, because $y$ now occurs doubly primed.
There is, nevertheless, an advantage in knowing that such a step can be performed 
(has a proper meaning), because in a more complicated clause set
such a resolvent with undesirable properties might turn out to be redundant 
(for instance, subsumed by another clause)
and would subsequently be removed anyway.

This brings forward the general question of expressivity of labelled clauses.
We know that only the clauses labelled by $(0,*,0), (*,*,0)$ and $(*,0,0)$, which are the labels 
of the initial labelled clause set, directly correspond to initial, step and goal clauses of LTL-specification, respectively.
When clauses with other labels arise during elimination,
the subsequent procedure for deciding satisfiability of the resulting set needs to know how to deal with them.
Interestingly, according to the following theorem, we may drop several kinds of labelled clauses
just after they are created without affecting satisfiability of the clause set. 

\begin{theorem} \label{thm_ignore}
Let $N$ be a finite set of labelled clauses and let $N^-$ be a subset of $N$ obtained be removing 
all the clauses with label of the form $(b,k,l)$ such that either $(b = 0$ and $k \neq *)$ or $(l \neq 0)$.
Then $N^-$ is satisfiable if and only if $N$ is.  
\end{theorem}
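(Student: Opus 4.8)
The plan is to prove the two implications separately. The direction from satisfiability of $N$ to satisfiability of $N^-$ is immediate: if $N$ is $(K,L)$-satisfiable for some $K\in\bbbn$, $L\in\bbbn^+$ via a valuation $W^*$, then $N^-\subseteq N$ gives $N^-_{(K,L)}\subseteq N_{(K,L)}$, so the same $W^*$ witnesses $(K,L)$-satisfiability of $N^-$. All the work lies in the converse, and the guiding idea is that although a removed clause may carry genuine content for \emph{some} choices of $K$ and $L$, one can always retreat to a pair $(K^*,L^*)$ in which every removed clause represents no standard clause over $\Sigma^*$ at all, while still retaining a model of $N^-$.

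So suppose $N^-$ is satisfiable and fix $K_0\in\bbbn$, $L_0\in\bbbn^+$, and $W^*_0$ with $W^*_0\models N^-_{(K_0,L_0)}$. I would look for a new pair of the form $K^*=K_0+d\,L_0$ and $L^*=m\,L_0$, with $d\in\bbbn$ and $m\in\bbbn^+$ to be chosen, having two properties: (i) $R_{(K^*,L^*)}(b,k,l)=\emptyset$ for the label $(b,k,l)$ of every clause of $N$ that was dropped; and (ii) $W^*_0$ still satisfies $N^-_{(K^*,L^*)}$. From (i) and (ii) one gets $N_{(K^*,L^*)}=N^-_{(K^*,L^*)}$, which $W^*_0$ satisfies, so $N$ is $(K^*,L^*)$-satisfiable and hence satisfiable. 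Property (i) is secured by examining the two families of dropped labels. A label $(0,k,l)$ with $k\neq *$ forces the index $t=0$ (because $b=0$) together with $k=K^*+s\,L^*$ for some $s\in\bbbn$, hence $k\geq K^*$; choosing $d$ large enough that $K^*$ exceeds every such $k$ occurring in a label of $N$ — finitely many — empties the represented set. A label $(b,k,l)$ with $l\neq 0$ forces $L^*\mid l$; choosing $m$ large enough that $m\,L_0$ exceeds every nonzero $l$ occurring in a label of $N$ makes this divisibility impossible. Finiteness of $N$ is precisely what makes such $d$ and $m$ available.

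Property (ii) rests on a monotonicity observation: passing from $(K_0,L_0)$ to $(K_0+dL_0,\,mL_0)$ can only shrink the index set represented by each label surviving in $N^-$. Those surviving labels are exactly $(0,*,0)$, $(*,*,0)$, and $(*,k,0)$ with $k\in\bbbn$ — the ones violating neither ``$b=0$ and $k\neq *$'' nor ``$l\neq 0$''. For the first two, $R_{(K,L)}(0,*,0)=\{0\}$ and $R_{(K,L)}(*,*,0)=\bbbn$ independently of $K,L$; for $(*,k,0)$, any $t$ with $t+k=K^*+s\,L^*=K_0+(d+sm)L_0$ also satisfies $t+k=K_0+s'L_0$ for $s'=d+sm\in\bbbn$, so $t\in R_{(K_0,L_0)}(*,k,0)$. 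Hence $N^-_{(K^*,L^*)}\subseteq N^-_{(K_0,L_0)}$, and $W^*_0$ still satisfies it, giving (ii). I expect the only delicate point of the whole argument to be this divisibility bookkeeping: keeping straight which of the three conditions defining $R_{(K,L)}$ each removed label fails under the new parameters, and checking that the adjustments of $d$ and $m$ neither interfere with each other nor spoil monotonicity. Conceptually the proof is short and I do not foresee a deeper obstacle.
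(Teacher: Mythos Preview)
Your proposal is correct and follows essentially the same approach as the paper's proof: pick $K^*=K_0+dL_0$ and $L^*=mL_0$ large enough to make every removed label irrelevant, then use monotonicity of the represented index sets under this reparametrization to keep the same $W^*$ as a model. You even spell out the monotonicity step for the surviving labels more carefully than the paper does, which simply asserts $(N^-)_{(K_1,L_1)}\subseteq (N^-)_{(K_0,L_0)}$ ``by the choice of $K_1$ and $L_1$''.
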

\begin{proof}
One implication is trivial as $N^- \subseteq N$. For the other, we need an auxiliary definition.
We say that a label $(b,k,l)$ is \emph{relevant} for a pair $(K,L)$ if $R_{(K,L)}(b,k,l) \neq \emptyset$.
Now any removed clause $\lclause{(b,k,l)}{C}$, i.e. a clause from $N \setminus N^-$, with $(b = 0$ and $k \neq *)$
is only relevant for pairs $(K,L)$ with $K=k$, 
and any removed clause with $(l \neq 0)$ is only relevant for pairs $(K,L)$ with $L$ dividing $l$. 

Let $N^-$ be $(K_0,L_0)$-satisfiable, i.e. some valuation $W^*$ satisfies $(N^-)_{(K_0,L_0)}$.
We may choose $K_1$ of the form $K_0 + i \cdot L_0$ and $L_1$ of the form $j \cdot L_0$ large enough
such that none of the clauses from $N \setminus N^-$ is relevant for $(K_1,L_1)$.
Therefore $(N \setminus N^-)_{(K_1,L_1)} = \emptyset$.
Moreover, $(N^-)_{(K_1,L_1)} \subseteq (N^-)_{(K_0,L_0)}$ by the choice of $K_1$ and $L_1$,
and so $W^*$ satisfies $N_{(K_1,L_1)}$ and thus $N$ is $(K_1,L_1)$-satisfiable. 
\end{proof}

\begin{example}
Deriving an empty labelled clause during elimination does not immediately imply that the current clause set is unsatisfiable.
For instance, the label of the empty clause $\lclause{(*,0,2)}{\emptycl}$ is only relevant for $(K,L)$ when $L$ divides $2$,
and thus the current clause set may still be $(K,L)$-satisfiable for $L > 2$.
\end{example}
After filtering a clause set with the help of Theorem \ref{thm_ignore},
it will only contain clauses with the familiar labels of the initial clause set and
possibly also clauses labelled by $(*,k,0)$, $k \in \bbbn$.
These do not pose any further expressivity complications,
as they arise naturally in our calculus LPSup \cite{LS4PLTLpaper12} for LTL satisfiability.




Let us now turn our focus to reductions, namely to showing how to extend 
subsumption to work with labels.\footnote{Another useful reduction in this context is 
\emph{self-subsuming resolution} \cite{EB05}. It amounts to a resolution inference followed 
by subsumption of one of the premises by the resolvent. Its labelled version
can be derived by combining the presented ideas.} We follow the same idea as with resolution.
Any standard clause represented by the subsumed labelled clause must 
be subsumed by a standard clause represented by the subsuming labelled clause.
Thus we say that $\lclause{(b_1,k_1,l_1)}{C}$ \emph{subsumes} $\lclause{(b_2,k_2,l_2)}{D}$,
if $C$ subsumes $D$ and the merge of the labels $(b_1,k_1,l_1)$ and $(b_2,k_2,l_2)$ is equal to $(b_2,k_2,l_2)$.
Similarly to resolution, the subsumption relation on labelled clauses 
can be made stronger 
if we allow the 
subsuming clause (but not the subsumed one) to be possibly shifted in time.
For example, the clause $\lclause{(*,*,0)}{q}$ subsumes $\lclause{(*,1,0)}{p \lor q'}$ in this sense. 
On the other hand, the clause $\lclause{(*,*,0)}{q'}$ cannot subsume $\lclause{(*,*,0)}{p \lor q}$,
because there is a standard clause represented by the latter, namely $(p \lor q)^{(0)} = p \lor q$,
that is not subsumed by any standard clause represented by the former.
Soundness of labelled clause elimination is stated as follows. 
\begin{theorem}
Let $N$ and $\widetilde{N}$ be sets of labelled clauses, such that $\widetilde{N} \subseteq N$
and for every $\mathcal{D} \in N \setminus \widetilde{N}$
there exists $\mathcal{C} \in \widetilde{N}$ such that $\mathcal{C}$ subsumes $\mathcal{D}$.
Then $N$ is $(K,L)$-satisfiable if and only if $\widetilde{N}$ is.
\end{theorem}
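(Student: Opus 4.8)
The plan is to keep the witnessing valuation fixed and push the subsumption from the level of labelled clauses down to the level of standard clauses over $\Sigma^*$. The direction ``$N$ is $(K,L)$-satisfiable $\Rightarrow$ $\widetilde{N}$ is'' is immediate: since $\widetilde{N}\subseteq N$ we have $\widetilde{N}_{(K,L)}\subseteq N_{(K,L)}$, so any $W^*$ satisfying $N_{(K,L)}$ already satisfies $\widetilde{N}_{(K,L)}$. All the work is in the converse, and the strategy there is: given $W^*\models\widetilde{N}_{(K,L)}$, show that every standard clause in $N_{(K,L)}$ that is not already in $\widetilde{N}_{(K,L)}$ is a (propositional) superset of some standard clause in $\widetilde{N}_{(K,L)}$, hence also satisfied by $W^*$.

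The core of the argument is an auxiliary lemma: if a labelled clause $\mathcal{C}$ subsumes a labelled clause $\mathcal{D}$ (in the sense of the section, i.e.\ possibly after time-shifting $\mathcal{C}$ via~(\ref{temp_shift1}) or~(\ref{temp_shift2})), then for every $(K,L)$ and every standard clause $E$ represented by $\mathcal{D}$ in $(K,L)$ there is a standard clause $E'$ represented by $\mathcal{C}$ in $(K,L)$ such that $E'$ subsumes $E$. To prove it I would write $\mathcal{D}=\lclause{(b_2,k_2,l_2)}{D}$ and let $\mathcal{C}^*=\lclause{(b^*,k^*,l^*)}{C^*}$ be the time-shifted instance of $\mathcal{C}$ realising the subsumption, so that $C^*$ subsumes $D$ as propositional clauses and the merge of $(b^*,k^*,l^*)$ with $(b_2,k_2,l_2)$ equals $(b_2,k_2,l_2)$. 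Since the merge of two labels represents, in every $(K,L)$, exactly the intersection of the index sets represented by its operands, this equality yields $R_{(K,L)}(b_2,k_2,l_2)\subseteq R_{(K,L)}(b^*,k^*,l^*)$. Now if $E=D^{(t)}$ with $t\in R_{(K,L)}(b_2,k_2,l_2)$, then $t\in R_{(K,L)}(b^*,k^*,l^*)$, so $C^{*(t)}$ is represented by $\mathcal{C}^*$ in $(K,L)$, and by soundness of time shift $C^{*(t)}$ is then also represented by $\mathcal{C}$ in $(K,L)$. Finally, applying the uniform forward shift $(\cdot)^{(t)}$ to the propositional inclusion $C^*\subseteq D$ gives $C^{*(t)}\subseteq D^{(t)}=E$, so $E':=C^{*(t)}$ is as required. (The degenerate case of no time shift is $\mathcal{C}^*=\mathcal{C}$ and is contained in this.)

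With the lemma available, I would finish as follows. Assume $\widetilde{N}$ is $(K,L)$-satisfiable, say $W^*\models\widetilde{N}_{(K,L)}$, and pick any $E\in N_{(K,L)}$, so $E=C^{(t)}$ for some $\lclause{\lambda}{C}\in N$ with $t\in R_{(K,L)}(\lambda)$. If $\lclause{\lambda}{C}\in\widetilde{N}$ then $E\in\widetilde{N}_{(K,L)}$ and $W^*\models E$. Otherwise $\lclause{\lambda}{C}\in N\setminus\widetilde{N}$, so by hypothesis some $\mathcal{C}\in\widetilde{N}$ subsumes it; the lemma produces $E'$ represented by $\mathcal{C}$ in $(K,L)$ — hence $E'\in\widetilde{N}_{(K,L)}$, so $W^*\models E'$ — with $E'$ subsuming $E$, whence $W^*\models E$. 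Thus $W^*\models N_{(K,L)}$. The routine ingredients here are the two bookkeeping facts about the operator $(\cdot)^{(t)}$ (it preserves the subset relation between clauses) and about merge (it computes the intersection of represented index sets), both already noted in the excerpt. The only genuinely delicate point I anticipate is keeping the indices straight in the lemma when the subsuming clause has first been time-shifted: one must observe that soundness of time shift says precisely that the set of standard clauses represented by a shifted clause is contained in that of the original, which is exactly what lets $C^{*(t)}$ be traced back from $\mathcal{C}^*$ to $\mathcal{C}$. Beyond that I expect no obstacle.
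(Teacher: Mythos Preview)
The paper states this theorem without proof, so there is nothing to compare against directly. Your argument is correct and is precisely the expected unfolding of the definition of labelled subsumption: the merge-equals-second-operand condition gives the inclusion $R_{(K,L)}(b_2,k_2,l_2)\subseteq R_{(K,L)}(b^*,k^*,l^*)$ via the intersection property of merge, soundness of time shift pulls representation back from $\mathcal{C}^*$ to $\mathcal{C}$, and the uniform shift $(\cdot)^{(t)}$ preserves the literal-subset relation, which is all that is needed. The case split on whether the originating labelled clause already lies in $\widetilde{N}$ and the appeal to the hypothesis otherwise are routine, and your treatment of the time-shifted subsuming clause is the only nontrivial bookkeeping, which you handle correctly.
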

 
We close this section by shortly discussing the overall variable and clause elimination procedure.
As already mentioned, it is advantageous to alternate variable elimination attempts
with exhaustive application of subsumption and possibly other reductions.
That's because removing a subsumed clause may turn elimination
of a particular variable into an improvement and, on the other hand,
new clauses generated during elimination may be subject to subsumption.
This holds true for the original SAT setting as it does with labels.
A detailed description on how to efficiently organize this process can be found in \cite{EB05}.

\section{Experimental evaluation} \label{sec_experiment}


For our evaluation of the effectiveness of variable and clause elimination in LTL,
we extended the preprocessing capabilities of Minisat \cite{minisat} version 2.2.
We kept Minisat's main simplification loop, which efficiently combines variable elimination
with subsumption and self-subsuming resolution, along with the fine-tuned heuristics 
for deciding which variables to eliminate and in what order.
We emulated labels by extending respective clauses with extra \emph{marking} literals\footnote{
For example, any goal clause $C$ is inserted as $C \lor g$, where $g$ is a fresh variable designated for marking goal clauses.}
and, to ensure correctness, we disallowed elimination of variables that occur both primed and unprimed in the input formula.
Although this does not exploit the full potential of variable and clause elimination with labelled clauses
as described in Sect.~\ref{sec_elim},
we already obtained encouraging results with this setup.

For testing we used a set of LTL benchmarks collected by Schuppan and Darmawan \cite{Schuppan11}.
The set consist of total 3723 problems from various sources (mostly previous papers on LTL satisfiability)
and of various flavors (application, crafted, random), and 
represents the most comprehensive collection of LTL problems we are aware of.  
The testing proceeded in three stages. First, all the benchmarks were translated by our tool 
 from the original format into LTL-specifications.
Then we applied the Minisat-based elimination tool 
and obtained a set of simplified LTL-specifications.
Finally, we ran two resolution-based LTL provers 
on both the original and simplified LTL-specifications 
to measure the effect of simplification on prover runtime.
We choose the LTL prover LS4 \cite{pltlProverIJCAR12}, 
most likely the strongest LTL solver\footnote{
LS4 solves 3556 of the above benchmarks within the timelimit of 60s,
the best system reported by Schuppan and Darmawan \cite{Schuppan11}, 
the bounded model checker of NuSMV 2.5, 
is able the solve 3368 of these benchmarks under the same conditions.} currently publically available,
and trp++ \cite{trp03}, a well established temporal resolution prover by Boris Konev.
Having performed the experiments on two independent implementations
should allow us to draw more general conclusions about the effects of variable and clause elimination.

The experiments were performed on our servers with 3.16~GHz Xeon CPU, 16~GB RAM, and Debian~6.0.
All the tools along with intermediate files and experiment logs can be found at \url{http://www.mpi-inf.mpg.de/~suda/vce.html}.

We recorded for each problem the number of variables and clauses 
that we were able to eliminate during the second stage.
We distinguished variables from the \emph{original} problem and \emph{auxiliary} variables
that were introduced during the transformation in stage one.
In total, 39\% of the variables (7\% original, 32\% auxiliary)
and 32\% of the clauses were eliminated. The numbers vary greatly over individual subsets of the benchmarks.
For example, the family \texttt{phltl} allowed for almost no simplification: 
only 3\% of the variables (just auxiliary), and 2\% of the clauses could be removed.
On the other hand, 99\% of the variables (almost all of them original) and 98\% of the clauses
were removed on the family \texttt{O1formula}.
While the former extreme can be explained by a concise and already almost clausal structure
of the original formulas from \texttt{phltl}, the latter follows from the fact
that most of the variables in \texttt{O1formula} occur in just one polarity, i.e. are pure.
Eliminating a pure variable amounts to removal of all the clauses in which the variable appears.\footnote{
If $x$ is a pure variable (literal) then $N_{\neg x}$ is empty and so $N_{x} \otimes N_{\neg x}$ is empty as well.}



\setlength{\tabcolsep}{1em}

\begin{table}[tb]
\caption{
Performance of the two provers on original (o) and simplified (s) problems, grouped by problem subset.
Number of problems \emph{solved} by each prover within the time limit 300 seconds
and the overall \emph{time} spent during the attempts are shown.
Unsolved problems contribute 300.0s, solved at least 0.1s due to the measurement technique.
The times spent on the actual simplification are not included;
these were observed to be negligible for most of the problems,
with maximum of 0.3s for the largest instance.}
\centering
\begin{tabular}{l@{ }r@{\quad\ }c|rr@{.}l|rr@{.}l}
\multirow{2}{*}{subset} & \multirow{2}{*}{
         size} && \multicolumn{3}{c|}{LS4} & \multicolumn{3}{c}{trp++} \\
              \cline{4-9}
        && \quad & \multicolumn{1}{c}{solved} & \multicolumn{2}{r|}{time} & \multicolumn{1}{c}{solved} & \multicolumn{2}{r}{time} \\

\hline
\multirow{2}{*}{\texttt{acacia} } & \multirow{2}{*}{ 
  71}
  & o & 71 
  & 7&1s & 71 
  & 39&3s \\
 && s & 71 
  & 7&1s & 71 
  & 11&3s \\

\hline
\multirow{2}{*}{\texttt{alaska} } & \multirow{2}{*}{ 
  140}
  & o & 121 
  & 6607&0s & 9 
  & 39423&2s \\
 && s & 139 
  & 882&0s & 12 
  & 38717&5s \\

\hline
\multirow{2}{*}{\texttt{anzu}  }& \multirow{2}{*}{ 
  111}
  & o & 93 
  & 5754&2s & 0 
  & 33300&0s \\
 && s & 94 
  & 5482&2s & 0 
  & 33300&0s \\

\hline
\multirow{2}{*}{\texttt{forobots}  }& \multirow{2}{*}{ 
  39}
  & o & 39 
  & 4&3s & 39 
  & 1198&8s \\
 && s & 39 
  & 3&9s & 39 
  & 194&2s \\

\hline
\multirow{2}{*}{\texttt{rozier} }& \multirow{2}{*}{ 
  2320}
  & o & 2278 
  & 13312&9s & 2063 
  & 96293&7s \\
 && s & 2278 
  & 13270&7s & 2120 
  & 76921&1s \\

\hline
\multirow{2}{*}{\texttt{schuppan}  }& \multirow{2}{*}{ 
  72}
  & o & 41 
  & 9332&8s & 36 
  & 11189&8s \\
 && s & 41 
  & 9320&9s & 37 
  & 10741&0s \\

\hline
\multirow{2}{*}{\texttt{trp} }& \multirow{2}{*}{ 
  970}
  & o & 940 
  & 12327&5s & 364 
  & 189045&2s \\
 && s & 934 
  & 11887&5s & 359 
  & 190138&3s \\

\hline \hline
\multirow{2}{*}{total  }& \multirow{2}{*}{ 
  3723}
  & o & 3583 
  & 47345&8s & 2582 
  & 370490&0s \\
 && s & 3596 
  & 40854&3s & 2638 
  & 350023&4s \\
\end{tabular}
\label{tab_experiment}
\end{table}

The results of the third stage, in which we measured the effect of simplification on the performance of the two selected provers, are
summarized in Table~\ref{tab_experiment} and at the same time represented graphically in Fig.~\ref{fig_runtimes}.
We see that both LS4 and trp++ substantially benefit from the simplification, both in the number of solved instances and the overall runtime.
On some subsets the effect is quite pronounced (see, e.g., LS4 on \texttt{alaska} or trp++ on \texttt{forobots}), while on others it is more modest.
Only on the subset \texttt{trp} did the simplification result in less problems solved. What the table does not show, however, 
is that even among the \texttt{trp} problems there were some only solved in the simplified form (16 such problems for LS4 and 9 for trp++).
When judging the relative number of problems gained by each prover,
it should be noted that many problems come from scalable families and 
are mostly trivial or too difficult to solve. This leaves the ``grey zone''
where improvement is possible relatively small.

\begin{figure}[tb]
\mbox{
\hspace{-0.3cm}
\subfigure{\includegraphics{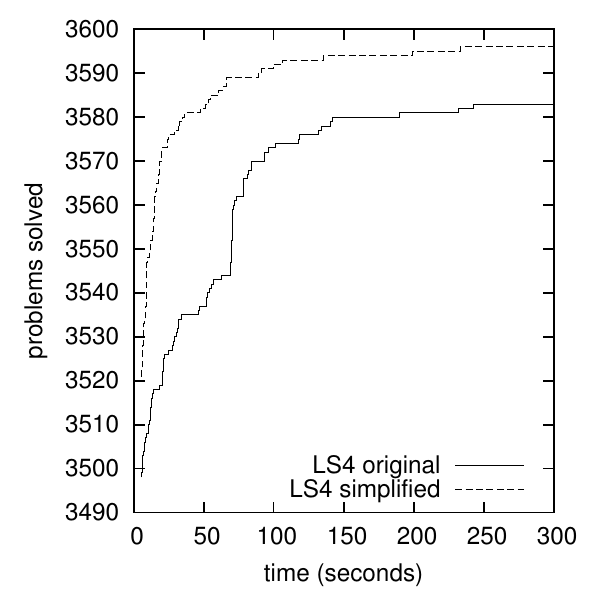}}
\subfigure{\includegraphics{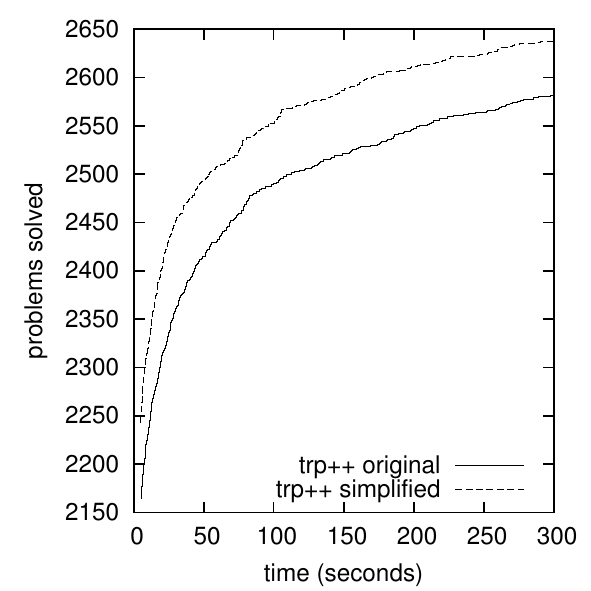}}
}
\caption{
Comparing the number of problems solved, simplified and original, within a given time limit.
Although the value ranges for LS4 (on the left) and trp++ (on the right) differ, both figures
demonstrate better performance on the simplified problems.}
\label{fig_runtimes}
\end{figure}

To conclude, the result of our evaluation indicate that variable and clause elimination
represents a useful preprocessing technique of LTL-specifications.
Simplifying a clause set not only removes redundancies introduced by a previous, potentially sub-optimal 
normal form transformation (when auxiliary variables get eliminated),
but usually reduces the input even further.
This ultimately decreases the time needed to solve the problem.
Further improvements are expected from an independent implementation
that will harness the full potential of the mechanism of labels.

\section{Discussion} \label{sec_discuss}


We are not aware of any related work directly focusing on simplifying clause normal forms for LTL.
However, some interesting connections can be drawn with the help of Remark \ref{rem_buchi} of Sect.~\ref{sec_normal},
which shows that an LTL-specification can be viewed as a symbolic representation of a \buchi\ automaton.
For instance, in the classical paper \cite{GPVW95}, an automaton accepting the models of an LTL formula $\varphi$
is constructed such that its states are identified with sets of $\varphi$'s subformulas.
A closer look reveals an immediate connection between these subformulas and
the variables introduced to represent them in the SNF for $\varphi$.
The above paper also suggests several improvements of the basic algorithm.
For instance, it is advocated that subformulas of the form $\mu_1 \land \mu_2$ need not be stored,
because the individual conjuncts $\mu_1$ and $\mu_2$ will be later added as well
and they already imply the conjunction as a whole. We can restate this 
on the symbolic level as an observation that a variable introduced
to represent a conjunctive subformula can always be eliminated, which is a claim easy to verify.

We believe this connection deserves further exploration,
as one could possibly use it to bring some of the numerous techniques
for optimizing explicit automata construction (see e.g. \cite{automRozVar07}) to the symbolic level. 
Note, however, that the main application of the explicit automata construction approach
lies in model checking and so the resulting automaton 
is required to be \emph{equivalent} to the original formula.
On the other hand, our clausal symbolic approach is meant for satisfiability testing only
and so more general \emph{satisfiability preserving} transformations are allowed.
An elimination of a variable from the original signature of the formula $\varphi$,
or the ``forgetting step'' justified by Theorem~\ref{thm_ignore} of Sect.~\ref{sec_elim},
are examples of transformations that do not have a counterpart on the automata side.

While the explicit notion of a symbolic representation of a \buchi\ automaton via a clause normal form
has received relatively little attention so far\footnote{A correspondence between SNF and \buchi\ automata has been shown
in \cite{BolotovEtal02}. The relevant theorem of the paper, however, does not establish an equivalence
between  models of the formula and accepting runs of the automaton. Its value for translating 
techniques between the symbolic and explicit approaches is, therefore, limited.}, 
symbolic approaches to LTL model checking and satisfiability based on Binary Decision Diagrams 
are well known \cite{Clarke97anotherlook}. Again, it seems possible that some optimization techniques
could be shared between the two approaches. For instance, different BDD encodings
recently studied by Rozier and Vardi \cite{multiRozVar11}, could correspond to different ways 
of turning a formula into an LTL-specification.

\section{Conclusion} \label{sec_concl}

We have shown that variable and clause elimination,
a practically successful preprocessing technique for propositional SAT problems,
can be adapted to the setting of linear temporal logic.
For that purpose we have utilized the mechanism of labelled clauses,
a method for interpreting an LTL formula as finitely represented 
infinite sets of standard propositional clauses.
The ideas were implemented and tested on a comprehensive set of benchmarks with encouraging results.
In particular, variable and clause elimination has been shown to significantly improve
subsequent runtime of resolution-based provers LS4 and trp++.

We would like to stress here that labelled clauses provide a general method 
for transferring resolution-based reasoning from SAT to LTL.
It is therefore plausible that other preprocessing techniques,
like, for example, the blocked clause elimination \cite{BlockedClause10},
can be adapted along the same lines. Exploring this possibility will be one of the directions for future work.


\bibliographystyle{abbrv}
\bibliography{clausalLTL}

\newpage
\appendix
\section{LTL preliminaries} \label{sec_prelim}

The language of Linear Temporal Logic (LTL) formulas is an extension of the propositional language
with temporal operators. The most commonly used are Next $\ltlnext$, Always $\always$,
Eventually $\eventually$, Until $\until$, and Release $\release$.
Formally, let $\Sigma = \{p,q,\ldots\}$ be a (finite) signature of propositional variables,
then the set of LTL formulas is defined inductively as follows:
\begin{itemize}
\item
	any $p \in \Sigma$ is a formula,

\item
	if $\varphi$ and $\psi$ are formulas, then so are $\neg \varphi$, $\varphi \land \psi$, and $\varphi \lor \psi$,

\item
	if $\varphi$ and $\psi$ are formulas, then so are $\ltlnext \varphi$, $\always \varphi$, $\eventually \varphi$, $\varphi \until \psi$, and $\varphi \release \psi$.

\end{itemize}

A propositional valuation, or simply a \emph{state}, is a mapping $W : \Sigma \rightarrow \{0,1\}$.
An \emph{interpretation} for an LTL formula is an infinite sequence of states $\mathcal{W} = (W_i)_{i \in \bbbn}$.
The truth relation $\mathcal{W},i \models \varphi$ between an interpretation $\mathcal{W}$, time index $i \in \bbbn$, and a formula $\varphi$
is defined recursively as follows:
\begin{flushleft}
\hspace{0.1cm}\begin{tabular}{l@{\quad}l}		
		$\mathcal{W},i \models p$ & iff $W_i \models p$, \\
		$\mathcal{W},i \models \neg \varphi$ & iff not $\mathcal{W},i \models \varphi$, \\
		$\mathcal{W},i \models \varphi \land \psi$ & iff $\mathcal{W},i \models \varphi$ and $\mathcal{W},i \models \psi$, \\
		$\mathcal{W},i \models \varphi \lor \psi$ & iff $\mathcal{W},i \models \varphi$ or $\mathcal{W},i \models \psi$, \\						
	  $\mathcal{W},i \models \ltlnext \varphi$ & iff $\mathcal{W},i+1 \models \varphi$, \\	  	  
	  $\mathcal{W},i \models \always \varphi$ & iff for every $j \geq i$, $\mathcal{W},j \models \varphi$, \\
	  $\mathcal{W},i \models \eventually \varphi$ & iff for some $j \geq i$, $\mathcal{W},j \models \varphi$, \\				
  \end{tabular}
\end{flushleft}	

\begin{center}
\begin{tabular}{l@{\quad}l}	  		
		$\mathcal{W},i \models \varphi \until \psi$ & iff there is $j \geq i$ such that $\mathcal{W},j \models \psi$ \\
		 & and $\mathcal{W},k \models \varphi$ for every $k$, $i\leq k < j$, \\
		$\mathcal{W},i \models \varphi \release \psi$ & iff for all $j \geq i$, $\mathcal{W},j \models \psi$ or \\
		 & there is $j \geq i$ with $\mathcal{W},j \models \varphi$ and for all $k$, $i\leq k \leq j$, $\mathcal{W},k \models \psi$.  \\
\end{tabular}
\end{center}		 
An interpretation $\mathcal{W}$ is a \emph{model} of an LTL formula $\varphi$ if $\mathcal{W},0 \models \varphi$. A formula $\varphi$ is called \emph{satisfiable}
if it has a model, and is called \emph{valid} if every interpretation is a model of $\varphi$.

\section{Transforming LTL formulas to SNF} \label{sec_ltl2snf}

Formulas in SNF are conjunctions of \emph{temporal clauses}, 
each of them assuming one of the following forms:
\begin{itemize}
\item
	an \emph{initial} clause: $\bigvee_j k_j$,

\item	
	a \emph{step} clause: $\always (\bigvee_j k_j \lor \bigvee_j \ltlnext l_j)$,
	
\item	
	an \emph{eventuality} clause: $\always (\bigvee_j k_j \lor \eventually l)$,

\end{itemize}
where $k_j, l_j,$ and $l$ stand for standard literals, i.e. propositional variables or their negation.

The translation of an LTL formula $\varphi$ into an equisatisfiable SNF starts by first turning 
$\varphi$ into an equivalent formula that is in Negation Normal Form (NNF), meaning the negation sign
only occurs in front of propositional variables in the leaves of the formula tree.
This can be achieved by a standard operation that ``pushes negations downwards'' with the help of De Morgan's rules
and temporal equivalences like $\neg \ltlnext \varphi \equiv \ltlnext \neg \varphi$, $\neg \always \varphi \equiv \eventually \neg \varphi$, 
and $\neg (\varphi \until \psi) \equiv (\neg \varphi) \release (\neg \psi)$.
Finally, multiple negations are absorbed with the help of the classical equivalence
$\neg \neg \varphi \equiv \varphi$. In what follows we assume that $\varphi$ is already in NNF.

\begin{figure}
\begin{center}
\begin{tabular}{l@{$\ \ \ $}r@{$\ \ \ $}c@{$ \ \ \ $}l}

	1. & $\tau[\always (\neg x \lor l)]$ & $\longrightarrow$ & $\always (\neg x \lor l)$, if $l$ is a literal, \\
	
  \noalign{\smallskip}
  2. & $\tau[\always (\neg x \lor (\varphi \land \psi))]$ & $\longrightarrow$ &
                                                      $\tau[\always (\neg x \lor \varphi)] \land
                                                       \tau[\always (\neg x \lor \psi)],$ \\
                                                        
  \noalign{\smallskip}
  3. & $\tau[\always (\neg x \lor (\varphi \lor \psi))]$ & $\longrightarrow$ &
                                                       $\always (\neg x \lor \mathbf{u} \lor \mathbf{v}) \land $\\
    &                                                & & $\tau[\always (\neg \mathbf{u} \lor \varphi)] \land
                                                       \tau[\always (\neg \mathbf{v} \lor \psi)],$ \\
                                                     
  \noalign{\smallskip}
  4. & $\tau[\always (\neg x \lor \ltlnext \varphi )]$ & $\longrightarrow$ & 
                                                     $\always (\neg x \lor \ltlnext \mathbf{u}) \land $ \\
    &                                             & & $\tau[\always (\neg \mathbf{u} \lor \varphi)],$ \\
                                                     
  \noalign{\smallskip}
  5. & $\tau[\always (\neg x \lor \always \varphi )]$ & $\longrightarrow$ &
                                                    $\always (\neg x \lor \mathbf{u}) \land
                                                     \always (\neg \mathbf{u} \lor \ltlnext \mathbf{u}) \land $ \\
   &                                              & & $\tau[\always (\neg \mathbf{u} \lor \varphi)],$ \\
                                                    
  \noalign{\smallskip}
  6. & $\tau[\always (\neg x \lor \eventually \varphi )]$ & $\longrightarrow$ &
                                                        $\always (\neg x \lor \eventually \mathbf{u}) \land $ \\
   &                                                  & & $\tau[\always (\neg \mathbf{u} \lor \varphi)],$ \\  
                                                        
	\noalign{\smallskip}
  7. & $\tau[\always (\neg x \lor (\varphi \until \psi )]$ & $\longrightarrow$ &
                                                        $\always (\neg x \lor \eventually \mathbf{v}) \land $ \\
   &                                                 & & $\always (\neg x \lor \mathbf{v} \lor \mathbf{w}) \land
                                                         \always (\neg \mathbf{w} \lor \mathbf{u}) \land
                                                         \always (\neg \mathbf{w} \lor \ltlnext \mathbf{v} \lor \ltlnext \mathbf{w}) \land$ \\
   &                                                & & $ \tau[\always (\neg \mathbf{u} \lor \varphi)] \land
                                                         \tau[\always (\neg \mathbf{v} \lor \psi)],$ \\                                                    
  
  \noalign{\smallskip}
  8. & $\tau[\always (\neg x \lor (\varphi \release \psi )]$ & $\longrightarrow$ &
                                                        $\always (\neg x \lor \mathbf{w}) \land 
                                                         \always (\neg \mathbf{w} \lor \mathbf{v}) \land  
                                                         \always (\neg \mathbf{w} \lor \mathbf{u} \lor \ltlnext \mathbf{w}) \land$ \\                                                   
   &                                                & & $ \tau[\always (\neg \mathbf{u} \lor \varphi)] \land
                                                         \tau[\always (\neg \mathbf{v} \lor \psi)],$ \\

\end{tabular}

\caption{The rules for SNF transformation. The freshly introduced variables are in bold.}
\label{fig_taurules}
\end{center}
\end{figure}


The actual transformation is performed with the help of operator $\tau$ defined in Fig.~\ref{fig_taurules},
which recursively reduces any formula of the form $\always (\neg x \lor \varphi)$ into the final SNF.
During the process, new ``fresh'' variables are being introduced (we typeset them in bold) which serve two different purposes:
They stand as names for subformulas (as in the case of the rules for, e.g., conjunction),
and may also play a role of ``trackers'' that influence the value of other variables 
not just in the current state, but also in those to follow. This is how the semantics of, e.g., 
the Always operator $\always$ is being encoded. The overall translation is triggered by the following rule
\[\varphi \ \ \ \longrightarrow \ \ \  \mathbf{i} \land \tau[\always ( \neg \mathbf{i} \lor \varphi)] \enspace, \]
with a fresh variable $\mathbf{i}$ that represents the whole formula.

\begin{example}
Here we work out an example from \cite{ctrFDP01} to demonstrate the translation procedure.
Assume we would like to prove the formula $(\eventually p \land \always (p \rightarrow \ltlnext p))\rightarrow \eventually \always p$.
In refutational theorem proving we proceed by negating the formula and trying to show the negation to be unsatisfiable.
By taking the negation into NNF (and translating away the implication symbol) we obtain
\[ (\eventually p \land \always (\neg p \lor \ltlnext p)) \land \always \eventually \neg p \enspace,\]
which is consequently translated into the following set of clauses:
\begin{center}
\begin{tabular}{c@{$\ \ \ $}l}
$i$ & By the initial rule. \\
  \noalign{\smallskip}
$\always (\neg i \lor \eventually u_1) $ & The first conjunct by rule 6, \\
$\always (\neg u_1 \lor p) $ & terminates by rule 1. \\
  \noalign{\smallskip}
$\always (\neg i \lor u_2) $ \\
$\always (\neg u_2 \lor \ltlnext u_2)$ & The second conjunct by rule 5, \\
$\always (\neg u_2 \lor u_3 \lor v_3)$ & inside which there is disjunction (rule 3), \\
$\always (\neg u_3 \lor \neg p)$ & the first argument is a literal (rule 1), \\
$\always (\neg v_3 \lor \ltlnext u_4)$ & the second goes by rule 4 \\
$\always (\neg u_4 \lor p) $ & and terminates by rule 1. \\
  \noalign{\smallskip}
$\always (\neg i \lor u_5) $ \\ 
$\always (\neg u_5 \lor \ltlnext u_5)$ & The third conjunct by rule 5, \\
$\always (\neg u_5 \lor \eventually u_6)$ & inside which we apply rule 6, \\
$\always (\neg u_6 \lor \neg p)$ & and terminate by rule 1. 
\end{tabular}
\end{center}
Notice that transformation $\tau$ introduces more new variables than would be strictly necessary.
For example, the variable $u_6$ just ``connects'' the last two clauses, which could be replaced by 
one equivalent eventuality clause $\always (\neg u_5 \lor \eventually \neg p)$. 
This is a price we pay here for the simple statement of the transformation rules in Fig.~\ref{fig_taurules}
(no side conditions
). An actual implementation would strive to detect the literal case as soon as possible,
and thus, e.g., introduction of $u_6$ would be avoided.
\end{example}


\section{Transforming general SNF to LTL-specification} \label{sec_snf2spec}

The transformation of general SNF to LTL-specifications focuses on eventuality clauses.
It consists in two simplification steps:
\begin{enumerate}
\item
	turning the \emph{conditional} eventuality clauses into \emph{unconditional} ones (of the form $\always \eventually l$),
\item
  reducing \emph{multiple} (unconditional) eventuality clauses from the SNF into \emph{just one} eventuality clause.
\end{enumerate}
We present our modification of the simplifications first introduced in \cite{simplifiedDFK02} that performs both steps at once.

Assume that an SNF of a formula contains $n$ (in general) conditional eventuality clauses
\[\always (C_i \lor \eventually l_i)\]
for $i=1,\ldots,n$, where $C_i$ is the conditional part, i.e.~a disjunction of literals.
We remove these, and replace them with a single unconditional eventuality clause
\begin{equation}
\always \eventually \mathbf{m} \label{cl_0}
\end{equation}
together with the following five step clauses for every $i=1,\ldots,n:$ 
\begin{align}
\always & ( C_i \lor l_i  \lor \mathbf{t_i} ), \label{cl_1} \\
\always & ( \neg \mathbf{t_i} \lor \ltlnext l_i  \lor \ltlnext \mathbf{t_i} ), \label{cl_2} \\
\always & ( \mathbf{s_i} \lor \neg \mathbf{t_i}  \lor \ltlnext \neg \mathbf{s_i}),  \label{cl_3} \\
\always & ( \neg \mathbf{s_i} \lor  \neg \mathbf{m}), \label{cl_4} \\
\always & ( \mathbf{s_i} \lor \ltlnext \neg \mathbf{m}), \label{cl_5}
\end{align}
where again the bold variables are supposed to be new to the formula.

The idea behind the simplification is the following: 
If the condition $\neg C_i$ is satisfied in the current state and the respective eventuality $l_i$ is not satisfied in the same state
we start ``tracking'' the eventuality  with the help of the new variable $\mathbf{t_i}$ (clause \ref{cl_1}).
The tracking variable $\mathbf{t_i}$ is forced to stay true also in the future states unless the eventuality $l_i$ is 
finally satisfied (clause \ref{cl_2}). Now let us look from the other side. The unconditional eventuality (clause \ref{cl_0})
will infinitely often ensure that all the variables $\mathbf{s_i}$ are false in one state (clause \ref{cl_4})
and were true in the previous state (clause \ref{cl_5}). Thus in the intervals between states where $\mathbf{m}$ holds,
there will always be two consecutive states where $\mathbf{s_i}$ changes from false to true. But this cannot happen
if we are tracking that particular eventuality at that time (clause \ref{cl_3}). 
To sum up, for each of the original eventualities we have a guarantee that 
in every interval between states where $\mathbf{m}$ holds
the eventuality was either not triggered at all ($\neg C_i$ was false in the whole interval)
or the eventuality was triggered and subsequently satisfied in that interval.
Please consult \cite{simplifiedDFK02} for a formal proof.

\begin{example} \label{example_multi_single}
Our previous example contained two conditional eventuality clauses
$\always (\neg i \lor \eventually u_1) $ and $\always (\neg u_5 \lor \eventually u_6)$.
We may replace these by the following set of clauses
to obtain an equisatisfiable problem with just one unconditional eventuality clause:
\begin{center}
\begin{tabular}{c@{$\ \ \ $}l}
$\always \eventually m$, \\
$\always  ( \neg i \lor u_1  \lor t_1 )$, \\
$\always  ( \neg t_1 \lor \ltlnext u_1  \lor \ltlnext t_1 )$, \\
$\always  ( s_1 \lor \neg t_1  \lor \ltlnext \neg s_1)$,\\
$\always  ( \neg s_1 \lor  \neg m)$,  \\
$\always  ( s_1 \lor \ltlnext \neg  m)$, \\
$\always  ( \neg u_5 \lor u_6  \lor t_2 )$, \\
$\always  ( \neg t_2 \lor \ltlnext u_6  \lor \ltlnext t_2 )$, \\
$\always  ( s_2 \lor \neg t_2  \lor \ltlnext \neg  s_2)$, \\
$\always  ( \neg s_2 \lor  \neg m)$, \\
$\always  ( s_2 \lor \ltlnext \neg  m)$. \\
\end{tabular}
\end{center}
\end{example}






\end{document}